\title{GULPS: Two-Qubit Gate Synthesis via Linear Programming for Heterogeneous Instruction Sets}
\author{
    \IEEEauthorblockN{Evan McKinney\,\orcidlink{0000-0002-4865-5458}\IEEEauthorrefmark{1}\IEEEauthorrefmark{2}\IEEEauthorrefmark{3}\IEEEauthorrefmark{4},  Lev S. Bishop\,\orcidlink{0000-0003-1318-1149}\IEEEauthorrefmark{4}}
    \IEEEauthorblockA{
    \IEEEauthorrefmark{1}Department of Computer Science, Yale University, New Haven, CT, USA \\
    \IEEEauthorrefmark{2}Yale Quantum Institute, Yale University, New Haven, CT, USA \\
    \IEEEauthorrefmark{3}Department of Electrical and Computer Engineering, University of Pittsburgh, Pittsburgh, PA, USA \\
    \IEEEauthorrefmark{4}IBM Quantum, IBM T.J. Watson Research Center, Yorktown Heights, NY, USA \\
    evan.mckinney@yale.edu, lsbishop@us.ibm.com}
}
\itshape\color{gray},
\newtheorem{theorem}{Theorem}[section]
\newcommand{\mynote}[3]{
     \fbox{\bfseries\sffamily\scriptsize#1}
        {\small$\blacktriangleright$\textsf{\emph{\color{#3}{#2}}}$\blacktriangleleft$}}}
\newcommand{\mynote}[3]{}}
\begin{document}

\maketitle

\thispagestyle{plain}
\pagestyle{plain}

\begin{abstract}
Modern quantum hardware exposes heterogeneous two-qubit instruction sets through fractional, continuously parameterized, and per-pair native gates, but synthesis remains largely framed around CNOT and a small catalog of closed-form rules. We present \textbf{GULPS} (Global Unitary Linear Programming Synthesis), a two-qubit compiler that partitions synthesis into depth-$2$ segments and uses a linear program over quantum Littlewood--Richardson reachability inequalities to plant the intermediate invariants between them. Each segment becomes an independent low-dimensional least-squares fit, solved by a Gauss--Newton/Levenberg--Marquardt routine. On Haar-random two-qubit targets, GULPS is more than $500{\times}$ faster than the general-purpose synthesizers BQSKit and NuOp at strictly lower circuit cost. Against Qiskit's specialized \texttt{XXDecomposer} on $XX$-family ISAs, GULPS produces identical output circuits $3.9$--$9.2{\times}$ faster, compounding to $7$--$19{\times}$ on full-circuit transpilation. All decompositions reach the double-precision unitary-infidelity floor. As a byproduct, the continuous formulation yields a Haar-averaged lower bound on expected circuit cost, against which discrete calibration choices can be benchmarked. GULPS is distributed on PyPI and registers as a Qiskit translation-stage plugin.
\end{abstract}

\begin{IEEEkeywords}
Quantum computing, quantum gates, quantum circuit, compilers, linear programming
\end{IEEEkeywords}


\section{Introduction}

\begin{figure}[t]
    \centering
    \includegraphics[width=\columnwidth]{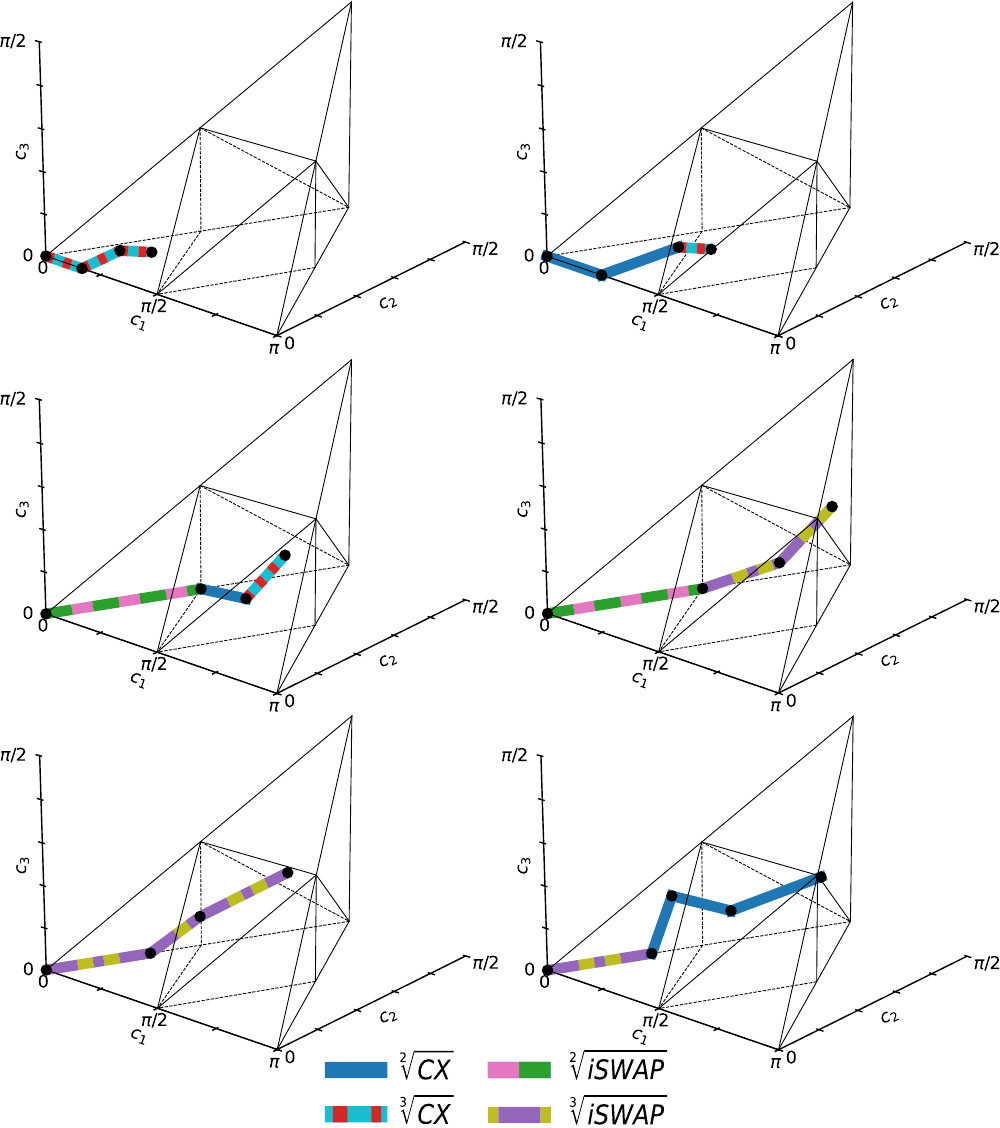}
    \caption{Segmented Weyl-chamber trajectories produced by GULPS on the heterogeneous ISA \(\mathcal{G} = \{\sqrt{\text{CX}},\sqrt[3]{\text{CX}},\sqrt{\text{iSWAP}},\sqrt[3]{\text{iSWAP}}\}\). Each panel is one Haar-random target. Markers trace the canonical-invariant point $(c_1, c_2, c_3)$ after each basis-gate segment, connected by depth-2 hops; chains begin at the identity and terminate at the target.}
    \label{fig:decomps}
\end{figure}

Two-qubit interactions are the primary nonlocal resource in quantum circuits. Beyond generating entanglement, they dominate the cost of implementing algorithmic primitives, decomposing larger unitaries, and routing circuits across hardware with limited connectivity. Variational algorithms use them to simulate two-body Hamiltonian terms; algorithms such as Shor's and Grover's compile through generic decompositions (Cosine--Sine, Quantum Shannon) into sequences of controlled-phase operations~\cite{shende2005synthesis, drury2008constructive, saeedi2010block, wierichs2025recursive}.

Recent hardware advances are producing increasingly heterogeneous two-qubit instruction set architectures (ISAs). Frame-tracking control techniques convert a nominal entangler such as cross-resonance into a broader family of SU(4) operations~\cite{wei2024native, chen2025efficient}, and fractional~\cite{perez_error-divisible_2023, almeida_fractional_2024} or continuously parameterized~\cite{foxen2020demonstrating, scarato2025realizing, sugawara20254, yale2025realization, yale2024noise, chen2024one} pulses expose entire one-parameter families of distinct two-qubit primitives. \textit{Devices are thus better viewed as exposing heterogeneous entangling libraries than a single native gate.}

Despite this hardware diversity, two-qubit synthesis remains largely framed around CNOT~\cite{vatan_optimal_2004, madden_best_2022}. Closed-form synthesis rules exist for a few special families (B gate, XX-family gates, iSWAP, $\sqrt{\text{iSWAP}}$), but none operate over a mixed library. A compiler that mixes entangling primitives can lower synthesis cost, expand the design space for parameterized ansätze, and enable routing strategies unavailable when synthesis is constrained to a single basis~\cite{lao_designing_2021, javadi_improving_2023, mckinney2024mirage, kalloor2024quantum}. Targeting native unitaries directly also avoids the pulse-level overhead of refocusing unwanted terms to emulate a textbook gate. Recent work calibrates each qubit pair to its own native entangling basis instead of enforcing a CNOT abstraction~\cite{lin2022let, kakkar2025no}.

Existing numerical methods can handle arbitrary two-qubit bases, but scale poorly as the ISA grows richer: runtime increases rapidly, and convergence degrades as each candidate sentence induces a new optimization task~\cite{rakyta2022approaching, younis_qfast_2020, davis2020towards, smith_leap_2022, nemkov_efficient_2023}.

\textbf{GULPS} is a two-qubit compiler for heterogeneous ISAs. It views the target unitary as a trajectory through gate-invariant space, from identity to the target. The picture is motivated by continuous Cartan flows generated by time-dependent control Hamiltonians~\cite{zhang_geometric_2003}. Rather than solving for the entire decomposition at once, it partitions the trajectory into depth-two segments and \textbf{uses a linear program over quantum Littlewood--Richardson (QLR) reachability inequalities to plant the intermediate invariants between them.} Each segment then becomes an independent low-dimensional fit, recovered by a lightweight numerical solver. The LP plays a dual role. It certifies that the chosen sentence can reach the target, and supplies the intermediates that decouple synthesis into per-segment subproblems.

This formulation leads to three contributions:
\begin{itemize}
\item \textbf{Sentence selection with intermediates:} A cost-ordered search over staircase monodromy constraints identifies the cheapest feasible gate sequence and emits the intermediate invariants between segments.
\item \textbf{Fixed-size local recovery:} A Gauss--Newton/Levenberg--Marquardt routine with analytic Jacobians solves each segment as an independent fit whose dimension does not grow with circuit depth.
\item \textbf{Empirical results:} GULPS is more than $500{\times}$ faster than BQSKit and more than $2500{\times}$ faster than NuOp on heterogeneous iSWAP-power ISAs, and $3.9$--$9.2\times$ faster than Qiskit's \texttt{XXDecomposer} on XX-family ISAs, while matching or improving circuit cost and reaching the double-precision infidelity floor.
\item \textbf{Calibration baseline:} A continuous-ISA formulation provides a Haar-averaged lower bound on expected circuit cost. On standard base-gate families, two well-chosen fractional strengths recover most of the achievable benefit, suggesting a small calibration budget already captures the value of a fully tunable family.
\end{itemize}

The remainder of the paper covers background on two-qubit invariants and monodromy polytopes (\S\ref{sec:background}), the segmented LP method (\S\ref{sec:methods}), benchmarks against NuOp, BQSKit, and Qiskit's \texttt{XXDecomposer} (\S\ref{sec:evaluation}), a discussion of timing, tail behavior, and scope (\S\ref{sec:discussion}), and a treatment of continuous gate families as a calibration-design reference (\S\ref{sec:continuous}).

\section{Background}
\label{sec:background}

\subsection{Two-qubit invariants and Cartan decomposition}

Any two-qubit unitary \( U \in \mathrm{PU}(4) \) admits a Cartan (KAK) decomposition~\cite{tucci_introduction_2005} that separates nonlocal interaction from local single-qubit rotations:
\begin{equation}
    \label{eq:canonical-decomp}
    U = K \cdot \mathrm{CAN}(a_1,a_2,a_3) \cdot K',
\end{equation}
\begin{equation*}
    \begin{tikzcd}
    & \gate[2]{U} & \qw \\
    & \qw         & \qw
    \end{tikzcd}
    =
    \begin{tikzcd}
    & \gate{K'} & \gate[2]{\mathrm{CAN}(a_1,a_2,a_3)} & \gate{K} & \qw \\
    & \gate{K'} & \qw                           & \gate{K} & \qw
    \end{tikzcd}
\end{equation*}
where \(K,\,K'\in\mathrm{PU}(2)\times\mathrm{PU}(2)\) are local unitaries, and
\begin{equation}
    \mathrm{CAN}(a_1,a_2,a_3) = \exp\Bigl(-i\bigl(a_1 XX + a_2 YY + a_3 ZZ\bigr)\Bigr)
    \label{eq:CAN}
\end{equation}
is the canonical two-qubit gate. When restricted to the Weyl chamber, the parameters \((a_1,a_2,a_3)\) uniquely characterize \(U\) up to local equivalence~\cite{zhang_geometric_2003, tucci_introduction_2005, makhlin2002nonlocal}.

Local equivalence $U \simeq V$ holds when $U$ and $V$ differ only by local pre- and post-multiplication, and is fully characterized by matching Cartan coordinates:
\[
\mathrm{CAN}(U)=\mathrm{CAN}(V)\;\Longleftrightarrow\; U = K\,V\,K'
\]
for some $K, K'\in\mathrm{PU}(2)\times\mathrm{PU}(2)$. In synthesis, surrounding single-qubit layers absorb local discrepancies between intermediate segments.

Because single-qubit rotations are typically inexpensive (e.g., virtual-$z$ rotations)~\cite{mckay_efficient_2017}, synthesis focuses on nonlocal invariants. We use three equivalent~\cite{watts2013metric} representations, each suited to a different role:

\textbf{Weyl coordinates.} The Cartan parameters \((a_1,a_2,a_3)\) specify a unique point in the Weyl chamber, giving the primary geometric description of two-qubit nonlocality, and we use them for visualization and KAK decompositions.

\textbf{Makhlin invariants.} Let \(U^Q := Q^\dagger U Q\) denote the magic-basis representation and define
\begin{equation}
    m(U) := (U^Q)^T U^Q.
\end{equation}
In the magic basis, local unitaries act as $\mathrm{SO}(4)\times\mathrm{SO}(4)$, so symmetric functions of the eigenvalues of $m(U)$ are two-sided local invariants. The pair $(g_1, g_2)$, normalized by $\det U$ to absorb global phase, separates local-equivalence classes~\cite{makhlin2002nonlocal}:
\begin{equation}
    g_1(U) := \frac{(\mathrm{tr}\,m(U))^2}{16\,\det U}, \qquad
    g_2(U) := \frac{(\mathrm{tr}\,m(U))^2 - \mathrm{tr}(m(U)^2)}{4\,\det U}.
\end{equation}
We use $(g_1, g_2)$ to cast the per-segment local-recovery step as a low-dimensional least-squares problem (\S\ref{sec:methods}).

\textbf{Monodromy coordinates.} Define
\begin{equation}
    \gamma_Q(U) := U(\sigma_y \otimes \sigma_y)U^T(\sigma_y \otimes \sigma_y),
\end{equation}
with eigenvalues \(e^{2\pi i \lambda_j}\). The ordered log-spectrum \(\lambda\), mapped into a fixed Weyl alcove, yields additive coordinates in which composition constraints become linear inequalities~\cite{peterson2020fixed}, and we use them to construct LP feasibility conditions in synthesis.

\subsection{Heterogeneous ISAs}
\label{sec:heterogeneous_isas}

Two-qubit interactions on modern hardware arise from tunable Hamiltonians whose effective form depends on control amplitudes, detunings, and interaction time, so platforms naturally realize families of entanglers rather than a single primitive. In superconducting systems with tunable couplers, this includes an iSWAP-family (mediated by microwave photon exchange) and CZ (phase accumulation on the $|f\rangle$ state)~\cite{krantz2019quantum}. In contrast, certain bosonic encodings instead yield families of logical ZZ and SWAP-family interactions~\cite{tsunoda2023error}. Analogous families appear, for example, in fluxonium charge coupling, quantum-dot exchange, M{\o}lmer--S{\o}rensen gates in trapped ions, and Rydberg blockade in neutral atoms.

Compilation interfaces expose these capabilities as a discrete collection of calibrated gates sampled from one or more continuous evolutions. We call such a library a \emph{heterogeneous instruction set architecture (ISA)}: a finite set $\mathcal{G}=\{G^{(1)},\ldots,G^{(m)}\}\subset \mathrm{PU}(4)$ of non-locally inequivalent primitives with different implementation costs. Heterogeneity arises either as (i) \emph{fractional} gates truncating one evolution at different times, or (ii) \emph{qualitatively distinct} entanglers from different Hamiltonians.

We adopt an additive cost model: each gate $G\in\mathcal{G}$ has weight $c(G)\ge 0$ (e.g., duration), and circuit cost is the sum. Each two-qubit segment is bracketed by a single-qubit layer of nonzero cost. Two-qubit gate count and two-qubit depth coincide in this setting (one 2q gate per layer). The single-qubit-layer cost adds a per-segment offset that favors fewer-but-stronger gates over longer chains of weaker ones, for example two cost-$1/2$ gates over four cost-$1/4$ gates of equal two-qubit weight. Minimizing total cost therefore subsumes both standard metrics. As the next subsection makes precise, compilation requires selecting a sequence of basis gates whose reachable invariant region contains the target, then recovering compatible local unitaries.


\subsection{Circuit sentences and monodromy polytope inequalities}

A \emph{circuit sentence} is a fixed sequence of two-qubit basis gates from \(\mathcal{G}\) interleaved with local unitaries:
\begin{equation}
S : \theta \mapsto \Bigl( K_0(\theta),\, G_1,\, K_1(\theta),\, G_2,\, \dots,\, G_n,\, K_n(\theta) \Bigr),
\end{equation}
where each \(G_j \in \mathcal{G}\) is fixed and each \(K_j(\theta)\in\mathrm{PU}(2)\times\mathrm{PU}(2)\) varies. Sweeping the local rotations carves out a convex region in invariant space, the \emph{circuit polytope}, which depends only on \((G_1,\dots,G_n)\). The polytope is invariant under reordering of the basis gates (Appendix~\ref{app:sentence_ordering}).

\textit{Whether a given sentence can realize a target reduces to testing containment of the target's invariants in the sentence's circuit polytope.} We build up to this depth-$n$ object in two stages. The atomic case is depth-$2$, where reachability admits an exact characterization as finite linear inequalities in monodromy coordinates. This is a special case of the more general \emph{monodromy polytope}~\cite{agnihotri1998eigenvalues, belkale2001local}, here applied to two-gate composition. The general circuit polytope then follows by chaining these depth-$2$ slices through intermediate invariants.

Concretely, depth-2 reachability is a \emph{multiplicative eigenvalue problem}: for two fixed gates \(G_1, G_2\), the canonical parameters of any composed unitary
\[
T = K_2\, G_2\, K_1\, G_1\, K_0
\]
are constrained by linear inequalities indexed by quantum Littlewood--Richardson (QLR) triples (Theorem 23 of~\cite{peterson2020fixed}). Fix integers \(r,k>0\) with \(r+k=4\), and let $Q_{r,k}$ denote the strictly increasing length-$r$ sequences with entries in $\{1,\ldots,r+k\}$. For each triple $a, b, c \in Q_{r,k}$ and integer level $d$ such that the QLR coefficient $N^{\,c,d}_{a,b}(r,k) = 1$, the parameters satisfy
\begin{equation}
\label{eq:qlr_inequalities}
d - \sum_{i=1}^{r} \Bigl(\alpha_{k+i} - a_i\Bigr) - \sum_{i=1}^{r} \Bigl(\beta_{k+i} - b_i\Bigr) + \sum_{i=1}^{r} \Bigl(\delta_{k+i} - c_i\Bigr) \ge 0.
\end{equation}
Here $\alpha, \beta, \delta$ are the sorted log-spectrum coordinates of $G_1$, $G_2$, $T$ in monodromy form (see~\cite{peterson2020fixed} for the full inequality table). Fixed-depth synthesis reduces to a linear feasibility check.

\subsection{Limitations of global numeric synthesis}

\begin{figure}[t]
    \centering
    \includegraphics[width=0.99\columnwidth]{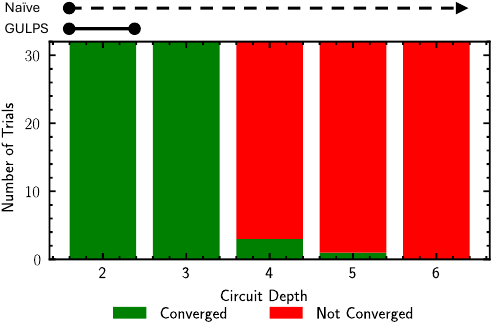}
    \caption{
    Convergence of the joint-circuit ansatz $G\,(R\otimes R)\,G\,\cdots\,G$ on the apex (largest-$c_3$ vertex) of the depth-$d$ polytope under \(\mathcal{G}=\{\sqrt[4]{\text{iSWAP}}\}\). Each trial optimizes $6(d-1)$ parameters jointly against a 3-component Makhlin residual; 32 independent Levenberg--Marquardt restarts per depth, convergence threshold $10^{-8}$ per component.
    }
    \label{fig:depth_convergence_failure}
\end{figure}

Without LP-supplied intermediate Weyl coordinates, a naive synthesizer falls into one of two failure modes. \emph{Joint optimization} fits a depth-$d$ parameterized ansatz end-to-end against the target's invariants: $6(d-1)$ local-rotation parameters against a 3-component residual, underdetermined and riddled with local minima, with parameter count growing in depth. \emph{Greedy segment-wise synthesis} converges one segment at a time without knowing the correct intermediate point, picking whatever drives the trailing partial product closer to the target. This trades the parameter blow-up for a myopic objective. The greedy intermediate need not lie on any feasible continuation, so the next segment may have no solution at all.

To isolate the joint failure mode, we run a convergence experiment on \(\mathcal{G}=\{\sqrt[4]{\text{iSWAP}}\}\). For each depth $d \in \{2,\dots,6\}$ we take the most extreme reachable point of the depth-$d$ polytope (largest-$c_3$ vertex) as a hard target, and run 32 independent Levenberg--Marquardt restarts on the joint Makhlin residual, counting convergence below $10^{-8}$ per component (max $2048$ function evaluations per restart). This setup resembles the inner loop of NuOp~\cite{lao_designing_2021}.

Fig.~\ref{fig:depth_convergence_failure} shows convergence collapsing abruptly with depth. The targets remain reachable at every depth tested, so feasibility is not the obstruction. There are simply too many free parameters at once: $d-1$ interleaved single-qubit layers contribute $6(d-1)$ angles, all solved for together, which random-restart Levenberg--Marquardt cannot navigate as $d$ grows. \textbf{The next section develops the LP-supplied intermediates that decouple this global non-convex search into independent low-dimensional subproblems.}

\section{Segmented Cartan trajectories}
\label{sec:methods}

\begin{figure}[t]
    \centering
    \includegraphics[width=0.9\columnwidth]{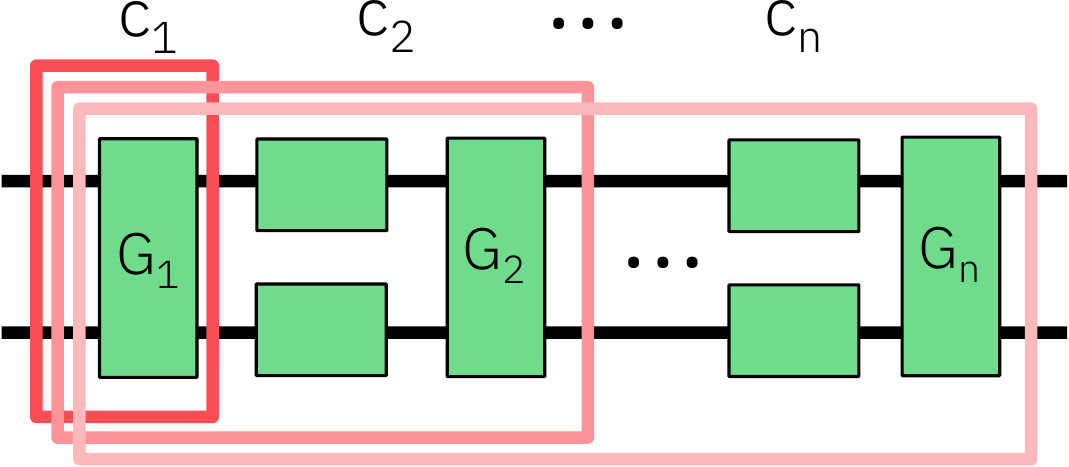}
    \caption{Schematic depth-$n$ decomposition circuit. Basis gates $G_i$ separate pairs of local rotations, with the canonical invariant $C_i$ recorded after each segment.}
    \label{fig:can}
\end{figure}

GULPS decomposes synthesis into a sequence of invariant-space segments, where each basis gate induces a constrained transition between consecutive invariant points. Let \(C_i \in \mathbb{R}^3\) denote the invariant after step \(i\). Each segment is governed by a QLR feasibility constraint of the form
\begin{equation}
\mathcal{L}(C_{i-1}, G_i, C_i) \le 0,\quad i=1,\dots,n,
\end{equation}
where \(G_i\) is the \(i\)-th basis gate (Fig.~\ref{fig:can}). In the notation of \S\ref{sec:background}, the log-spectrum coordinates $\alpha, \beta, \delta$ of two composing gates and their product specialize to $C_{i-1}, G_i, C_i$ in the per-segment formulation.

A valid decomposition requires simultaneous satisfaction of all segment constraints:
\begin{equation}
\bigcap_{i=1}^{n} \left\{\mathcal{L}(C_{i-1}, G_i, C_i) \le 0 \right\},
\label{eq:inequalities_union_fixed}
\end{equation}
subject to boundary conditions
\begin{equation}
\mathrm{CAN}(C_0) \simeq I, \qquad \mathrm{CAN}(C_n) \simeq T.
\end{equation}

Because \(C_1\) is fully determined by \(G_1\), the first constraint block can be absorbed into the second, yielding an equivalent formulation in which only intermediate invariants \(C_2,\dots,C_{n-1}\) remain free. Consequently, an \(n\)-gate sentence yields \(3(n-2)\) optimization variables across \(72(n-1)\) linear inequalities.


\subsection{Linear programming formalism}

These constraints assemble into a linear program (LP) over the intermediate invariants $C_1, \dots, C_{n-1}$, and in some cases the gate parameters as well. The form of $x$ depends on whether the ISA is discrete or continuous.

\paragraph{Discrete ISA, fixed sentence}
When each $G_i$ is drawn from a discrete set $\mathcal{G}$ and the sentence $(G_1, \dots, G_n)$ is fixed in advance, the gate monodromies are constants. Only the intermediate invariants are unknown:
\begin{equation}
x = [C_2, C_3, \dots, C_{n-1}] \in \mathbb{R}^{3(n-2)}, \qquad \min_x c^\top x \;\;\text{s.t.}\;\; A x \le b,
\end{equation}
with the QLR inequalities stacked into $A$, the fixed gate and target invariants absorbed into $b$, and an objective vector $c$ that regularizes the feasible point (set below). The LP returns a chain of intermediate invariants $(C_2, \dots, C_{n-1})$ that the downstream numerical stage uses as segment targets.

\paragraph{Continuous ISA, single family}
In a continuously parameterized family, gates are indexed by a strength $k \in [0,1]$, with $k = 0$ the identity and $k = 1$ the full base gate. For the families we treat here, the monodromy at strength $k$ is linear in $k$: if $B$ is the monodromy at $k = 1$, then strength $k$ has monodromy $B k$. This linearity holds for time-independent single-Cartan generators evolved for variable time, covering the iSWAP, XX, and fSim-style families considered in this paper; explicitly time-dependent control envelopes are left to future work. Substituting $g_i = B k_i$ for the gate monodromy in each QLR block keeps every constraint linear in the new unknowns, so the LP gains the strengths $k_i$ as decision variables alongside the intermediate invariants.

Two structural features promote this LP to a mixed-integer program. Each $k_i$ is \emph{semicontinuous}: a slot is either empty ($k_i = 0$) or active above the calibration floor ($k_i \in [k_{\min}, 1]$), encoded by a binary indicator $z_i \in \{0,1\}$ with the linking constraint $k_{\min}\, z_i \le k_i \le z_i$. Monotonic ordering of active strengths, $k_1 \ge k_2 \ge \cdots$, rules out permutations of the same multiset. Section~\ref{sec:continuous} uses this formulation as a Haar-weighted lower bound on the cost achievable by any discrete sampling of $G^k$.

In both formulations, the QLR inequalities couple only adjacent invariants, so the constraint matrix has a \emph{staircase} (block-tridiagonal) structure~\cite{fourer1982solving}. Two consequences shape the solver design. First, $A$ depends only on the length of the sentence; gate identities and the target enter only through $b$. A single dual-simplex solver therefore handles every length-$n$ instance, regardless of which gates appear or what the target is. Each new target perturbs only the last few entries of $b$, so a warm-started basis re-converges in a handful of pivots rather than a cold solve from scratch. Second, the local coupling makes infeasibility propagate fast. A violation in an early segment restricts every downstream feasible region, so the simplex typically rejects an infeasible sentence within a few iterations. 

We therefore set $c = -\mathbf{1}$, giving the simplex a direction to pivot in and pulling the intermediate invariants toward the polytope interior, away from the degenerate Weyl faces that complicate downstream local-rotation recovery. This also stabilizes the answer: instead of returning \emph{some} feasible point, the LP returns \emph{the} point that maximizes $\sum C_i$. Two similar targets land on similar intermediates rather than scattering across the feasible region, so a cached basis re-converges in a handful of pivots across clusters of related unitaries.

For discrete ISAs of moderate size ($|\mathcal{G}| \lessapprox 10$), GULPS enumerates sentences lazily through a cost-ordered priority queue, admitting each only when its gate costs are non-decreasing along the sentence, and testing LP feasibility in turn until the first accept. Constraint matrices are cached for reuse on later targets. As an alternative for batched compilation, the \texttt{monodromy} package can precompute the union of cost-ordered sentence polytopes.

\subsection{Numerical synthesis per segment}
\label{sec:numerics}

At this point the LP has fixed both the basis-gate sequence $\{G_i\}$ and the chain of intermediate invariants $\{C_i\}$. We recover the single-qubit rotations that stitch the segments together in two pieces. The \emph{interior} rotations $U$ sit between each invariant $C_{i-1}$ and the next basis gate $G_i$, and are responsible for matching the next invariant: the product $\mathrm{CAN}(C_{i-1})\, U\, G_i$ must be locally equivalent to $\mathrm{CAN}(C_i)$. The \emph{exterior} rotations $K$ then lift this local equivalence to genuine unitary equivalence, and are recovered through Cartan KAK as a separate post-processing step. Both $U$ and $K$ live in $\mathrm{PU}(2)\times\mathrm{PU}(2)$; we drop the segment index on $K$ for brevity.

For the three-segment case (the trivial initial segment $C_1 = G_1$ is omitted), the LP yields the chain
\[
C_2 \simeq G_2\, U_1\, G_1, \qquad T \simeq G_3\, U_2\, C_2.
\]
Each segment is solved for its $U_i$ independently. Replacing $C_2$ with its canonical form $\mathrm{CAN}(C_2)$ and recovering the surrounding $K$ rotations via KAK, the full unitary collapses to the alternating form
\[
T = K\, G_3\, K\, G_2\, K\, G_1\, K.
\]

For an $n$-segment LP solution, synthesis reduces to independently solving, for each segment $i$,
\begin{equation}
\mathrm{CAN}(C_i) \;\simeq\; \mathrm{CAN}(C_{i-1})\, U_{i-1}\, G_i,
\end{equation}
the effective saturation problem (Problem 12.2 of~\cite{peterson2020fixed}). After computing each interior rotation $U_i$ but before applying KAK to each intermediate $\mathrm{CAN}(C_i)$, the target takes the nested form:
\begin{equation}
T = K\Bigl( G_n\, U_{n-1}\, \Bigl( \cdots \Bigl( K\, G_2\, U_1\, G_1  K\Bigr) \cdots \Bigr)\Bigr) K.
\end{equation}

Segments could in principle be solved sequentially, substituting each $U_i$ back into subsequent templates to avoid nested KAK recoveries. We solve the segments independently. Post-hoc single-qubit merging is cheap, and the subproblems parallelize naturally without accumulating error.


The interior rotation is an element of \(\mathrm{PU}(2)\times\mathrm{PU}(2)\), six real parameters split across the two qubits. We benchmarked three coordinate choices in the restart loop (Euler angles, rotation vectors, unit quaternions); the production solver uses unit quaternions, which were empirically fastest. We use rotation-vector (RV) form below, with $\vec{v}_1, \vec{v}_2 \in \mathbb{R}^3$ the rotation vectors on each qubit, giving the ansatz
\begin{equation}
\label{eq:ansatz_final}
\begin{tikzcd}
& \gate[2]{C_{i-1}} & \gate{R(\vec{v}_1)} & \gate[2]{G_i} & \qw \\
& \qw               & \gate{R(\vec{v}_2)} & \qw         & \qw
\end{tikzcd}
\simeq
\begin{tikzcd}
& \gate[2]{C_i} & \qw \\
& \qw         & \qw
\end{tikzcd}
\end{equation}
where the rotation is defined as
\begin{equation}
\label{eq:RV_final}
R(\vec{v}) = \exp\Bigl(-i\,\frac{\vec{v}\cdot\sigma}{2}\Bigr),
\end{equation}
with \(\sigma=(\sigma_x,\sigma_y,\sigma_z)\) the Pauli matrices. 

We compare candidate and target through their Makhlin invariants $M[G]$, defining the residual
\begin{equation}
\label{eq:residual_final}
r(\vec{v}_1,\vec{v}_2) = M\Bigl[U(\vec{v}_1,\vec{v}_2)\Bigr] - M\bigl[U_{\mathrm{target}}\bigr],
\end{equation}
and minimize the least-squares cost
\begin{equation}
\label{eq:cost_final}
\min_{\vec{v}_1,\vec{v}_2\in\mathbb{R}^3} \; \|r(\vec{v}_1,\vec{v}_2)\|^2.
\end{equation}
An optimal solution with \(r=0\) is guaranteed to exist.

The Weyl objective requires computing eigenvalues in the hot path while the Makhlin objective does not, so we solve in two stages. Stage 1 drives the Makhlin residual down with Gauss--Newton and random restarts. Stage 1 alone is enough to certify local equivalence in principle, but in the iterative regime the Makhlin residual saturates a few orders of magnitude before the corresponding Weyl coordinates do, leaving the segment short of the near-exact compilation we want for the KAK stitch. Stage 2 polishes the best Stage-1 restart against a Weyl-distance objective with Levenberg--Marquardt, converging in a small handful of iterations at generic targets.

Given the optimal $\vec{v}_1^*, \vec{v}_2^*$, the interior rotation $U$ is built from the corresponding single-qubit rotations; KAK then recovers the exterior rotations to yield the segment unitary:
\begin{equation}
\mathrm{CAN}(C_i) = K\, \Bigl(G_i\, \bigl(R(\vec{v}_1^*) \otimes R(\vec{v}_2^*)\bigr)\, C_{i-1}\Bigr)\, K.
\end{equation}

Applying this procedure independently to each segment reduces general two-qubit synthesis to a series of low-dimensional, robust optimizations.

\subsection{$\rho$-reflections}
\label{sec:rho_reflection}

The QLR inequalities live in the $\mathrm{SU}(4)$ Weyl chamber, which is twice the size of the physical $\mathrm{PU}(4)$ chamber. Each $\mathrm{PU}(4)$ point therefore has two $\mathrm{SU}(4)$ representatives, related by the \emph{$\rho$-reflection}
\begin{equation}
\rho:\; (a_1,\, a_2,\, a_3) \;\longmapsto\; (\pi/2 - a_1,\; a_2,\; -a_3).
\end{equation}
$T$ and $\rho(T)$ are the same physical gate at different $\mathrm{SU}(4)$ chamber points, and the QLR construction sees only one of them. The two orientations are related by the circuit identity
\begin{equation}
\label{eq:rho_circuit_identity}
\begin{tikzcd}[column sep=tiny]
& \gate[2]{\scriptstyle\mathrm{CAN}(a_1, a_2, a_3)} & \qw \\
& \qw                                    & \qw
\end{tikzcd}
\;=\;
\begin{tikzcd}[column sep=tiny]
& \gate{Y} & \gate[2]{\scriptstyle\mathrm{CAN}(\frac{\pi}{2}-a_1, a_2, -a_3)} & \gate{Z} & \qw \\
& \qw      & \qw                                              & \gate{X} & \qw
\end{tikzcd}
\end{equation}
which converts $\rho(T)$ to $T$ using only single-qubit Paulis that absorb into the adjacent local-correction layers $K$, so neither the basis-gate sequence nor its cost is affected. Both stages of GULPS handle the $\rho$ ambiguity, but they see it from opposite sides.

The LP works on the $\mathrm{SU}(4)$ representative, so neither orientation can be inferred from the other (Corollary~25,~\cite{peterson2020fixed}). A fixed gate sequence may admit one of $\{T, \rho(T)\}$ inside its monodromy polytope while the other lies outside, so both must be tested explicitly. In the discrete formulation this is two sequential solves; in the continuous MILP it is a single binary variable selecting between $T$ and $\rho(T)$.

The numerics see the opposite degeneracy. Makhlin invariants are insensitive to $\rho$, so $T$ and $\rho(T)$ produce identical Stage-1 residuals, and a Gauss--Newton restart can converge to either basin regardless of which orientation the LP committed to. Before invoking the Stage-2 Weyl polish, GULPS evaluates the candidate's canonical Weyl coordinates; if they sit in the basin opposite to the LP's choice, the circuit identity~\eqref{eq:rho_circuit_identity} flips them back. The polish then drives toward the LP-aligned orientation, which is what the downstream stitch step expects when chaining intermediate invariants.

\section{Evaluation}
\label{sec:evaluation}

GULPS minimizes circuit cost by construction through the cost-ordered LP, so we focus the evaluation on wall-clock and fidelity. No prior tool minimizes cost over a heterogeneous discrete ISA, so we compare against existing decomposers in the regimes each one handles: NuOp~\cite{lao_designing_2021} and BQSKit~\cite{younis_berkeley_2021,davis2020towards} on iSWAP-family ISAs (Sec.~\ref{sec:general_comparison}), and Qiskit's \texttt{XXDecomposer}~\cite{peterson2022optimal} on $XX$-family ISAs, the only multi-gate heterogeneous case Qiskit handles (Sec.~\ref{sec:xx_comparison}). GULPS accepts any ISA of two-qubit unitaries; random ISAs do not appear here because no baseline accepts them.


\begin{figure}[t]
    \centering
    \includegraphics[width=\columnwidth]{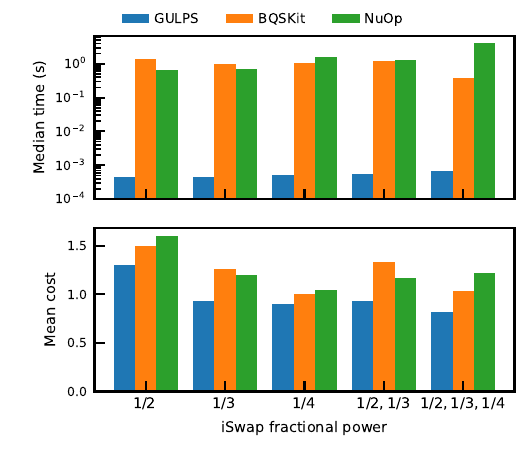}
    \caption{Median per-target wall-clock (top) and mean solution cost (bottom) across five iSWAP-power ISAs of increasing richness. 200 Haar-random targets per ISA.}
    \label{fig:cross_comparison}
\end{figure}

\subsection{Experimental Setup}
\label{sec:setup}

All experiments use GULPS v0.3.5 and standardized configurations of the comparison tools. NuOp enumerates depth-$k$ ans\"atze and runs one-shot continuous optimization over $6k$ parameters (U3 rotations between basis gates). BQSKit (v1.2.1) performs $A^\ast$-style search over partial circuit prefixes, optimizing each appended gate's parameters numerically; the default two-qubit engine is QSearch~\cite{davis2020towards}. NuOp targets depth-greedy length and BQSKit targets raw two-qubit gate count, so neither minimizes weighted cost over a heterogeneous ISA. Qiskit's \texttt{XXDecomposer} (Qiskit 2.3.1) is restricted to the $XX$-family (CX/RZX) and uses closed-form rules known for fixed XX interactions~\cite{peterson2022optimal}. All two-qubit targets are Haar-random unitaries~\cite{mele2024introduction} drawn with explicit seeds. Wall-clock times are measured on a single AMD Ryzen 5 5600X core.

\subsection{General ISAs: GULPS vs.\ NuOp and BQSKit}
\label{sec:general_comparison}

We compare on five iSWAP-power ISAs spanning single-gate to heterogeneous: $\{\sqrt[k]{\text{iSWAP}}\}$ for $k \in \{2, 3, 4\}$, $\{\sqrt{\text{iSWAP}}, \sqrt[3]{\text{iSWAP}}\}$, and $\{\sqrt{\text{iSWAP}}, \sqrt[3]{\text{iSWAP}}, \sqrt[4]{\text{iSWAP}}\}$. Each decomposer receives the same Haar-random targets and the same per-ISA cost weights (200 per ISA).

\textbf{On every ISA tested, GULPS is more than $500{\times}$ faster than BQSKit and more than $2500{\times}$ faster than NuOp at strictly lower circuit cost} (Fig.~\ref{fig:cross_comparison}). Median GULPS wall-clock stays under $0.6\,$ms across all five ISAs. BQSKit takes $0.22$--$0.37\,$s ($500$--$1800{\times}$ slower); NuOp takes $0.51$--$3.2\,$s ($2500$--$10{,}000{\times}$ slower). NuOp degrades sharply because its layer-by-layer sweep enumerates each depth from scratch.

GULPS finds the cost-optimal sentence on every target. Neither baseline does, and which one is closer to optimum depends on the ISA. When a baseline misses, it has selected an over-expressive sentence with strictly more two-qubit gates than the optimum. Mean solution cost lands $9$--$24\%$ above the GULPS optimum, and this per-block penalty accumulates across a full circuit (Sec.~\ref{sec:full_circuit}).

GULPS reaches the double-precision unitary-infidelity floor at ${\sim}10^{-15}$; BQSKit sits just above it. NuOp's reported infidelity plateaus around ${\sim}10^{-10}$, reflecting its configured tolerance and the joint-ansatz failure mode probed in Fig.~\ref{fig:depth_convergence_failure}.

\subsection{XX-Family ISAs: GULPS vs.\ XXDecomposer}
\label{sec:xx_comparison}

We measure total wall-clock for $1000$ Haar-random two-qubit targets across nine progressively richer CX-power ISAs, from $\{\text{CX}\}$ alone through $\{\text{CX}, \sqrt{\text{CX}}, \sqrt[3]{\text{CX}}, \sqrt[4]{\text{CX}}\}$.

GULPS is $3.9$--$9.2{\times}$ faster than \texttt{XXDecomposer} on every CX-power ISA tested (Fig.~\ref{fig:gulps_vs_xx}), at zero cost penalty. The speedup grows with ISA richness, from $3.9{\times}$ on $\{\sqrt[3]{\text{CX}}\}$ alone to $9.2{\times}$ on $\{\sqrt{\text{CX}}, \sqrt[3]{\text{CX}}, \sqrt[4]{\text{CX}}\}$. Output cost is identical on every ISA. The widening reflects a structural difference. GULPS' LP enumerates cost-ordered sentences in microseconds regardless of which gates appear, while \texttt{XXDecomposer} traverses $XX$-specific commutation chains whose per-block cost rises with ISA richness.

\begin{figure}[t]
    \centering
    \includegraphics[width=\columnwidth]{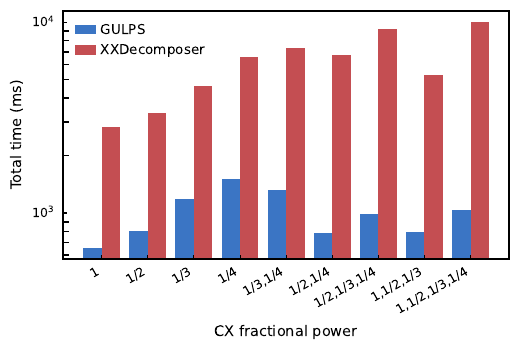}
    \caption{Total wall-clock for $1000$ Haar-random two-qubit decompositions across CX-power ISAs. ISA labels show the CX powers in the gate set; e.g., $\{1, 1/2, 1/3\} = \{\text{CX}, \sqrt{\text{CX}}, \sqrt[3]{\text{CX}}\}$.}
    \label{fig:gulps_vs_xx}
\end{figure}

\subsection{Full-Circuit Transpilation}
\label{sec:full_circuit}

Section~\ref{sec:xx_comparison} measures per-target decomposition cost; full-circuit transpilation aggregates this across many two-qubit blocks. We test on four benchmark families: QFT (structured algorithms such as Shor and phase estimation), EfficientSU2 with reps=$3$ and circular entanglement (hardware-efficient ansatz for VQE/QAOA), Quantum Volume at depth $n$, and Random at depth $4n$ with max-operands $2$. Qubit counts $n \in \{4, 8, 16, 32, 64\}$ on an all-to-all discrete RZZ basis $\{\pi/2, \pi/4, \pi/6\}$. Both pipelines run as Qiskit \texttt{PassManager}s (Unroll3qOrMore + Collect2qBlocks + ConsolidateBlocks + UnitarySynthesis or the GULPS translation plugin); we report the median of $3$ runs.

GULPS transpiles full circuits $7$--$19{\times}$ faster than the \texttt{XXDecomposer}-based pipeline (Fig.~\ref{fig:circuit_scaling}), with output circuit durations identical to within $1.2\,\mu$s at every size. The largest speedup appears on EfficientSU2 ($18.8{\times}$ at $32$ qubits, $15.6{\times}$ at $64$). Its many small two-qubit blocks are dominated by per-block transpiler overhead. For scale, a ${\sim}11{,}000$-gate Random circuit at $64$ qubits transpiles in $1.8\,$s under GULPS where \texttt{XXDecomposer} takes $20.2\,$s.

Across the full experiment suite ($>5000$ decompositions), GULPS reaches a unitary infidelity floor of ${\sim}10^{-15}$, limited by double-precision arithmetic.

\begin{figure}[t]
    \centering
    \includegraphics[width=\columnwidth]{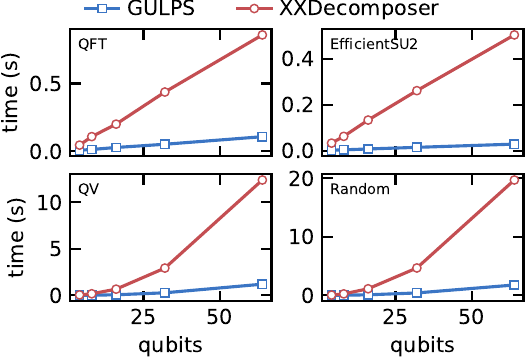}
    \caption{Median transpilation wall-clock time vs.\ qubit count for four benchmark circuit families on a discrete RZZ$\{\pi/2, \pi/4, \pi/6\}$ ISA with all-to-all connectivity. Median of $3$ runs.}
    \label{fig:circuit_scaling}
\end{figure}
\section{Discussion}
\label{sec:discussion}

The bulk runtime decomposes cleanly into two terms,
\[
T_{\text{total}} \;\approx\; q\,t_{\text{LP}} \;+\; n\,t_{\text{seg}},
\]
where $q$ is the queue depth (the number of sentences tried before the first feasible one), $t_{\text{LP}}$ is the per-LP cost, $n$ is the chosen sentence depth, and $t_{\text{seg}}$ is the per-segment numerical cost. Both $q$ and $n$ are Haar-volume-weighted properties of the ISA's coverage polytope and are computable analytically from the QLR inequalities; only $t_{\text{LP}}$ and $t_{\text{seg}}$ require empirical timing. To isolate the LP term, we sweep the progressive iSWAP-family $\mathcal{G}_k = \{\text{iSWAP}^{1/2}, \dots, \text{iSWAP}^{1/(k+1)}\}$ for $k = 1, \dots, 5$ over $200$ Haar-random targets per ISA. This family was chosen as an illustrative case where $(q, n)$ separate cleanly. Each smaller fractional power adds cheaper-but-narrower coverage, padding the LP queue with infeasible candidates while the expected depth of the optimal sentence barely grows. Other ISAs would mix $q$ and $n$ less cleanly, but the same $(q, n)$ decomposition applies. Across the sweep, $q$ grows $33{\times}$ ($1.2 \to 39.7$) while $n$ moves only $2.2 \to 3.3$. The empirical breakdown (Fig.~\ref{fig:profiling}) tracks this analytical pair, with numerics stable at $0.04$--$0.08\,$ms while LP grows from $32\,\mu$s to $527\,\mu$s. \textbf{$(q, n)$ are the two primary drivers of runtime on any new ISA, but they depend on the cost weights as much as on the gate set, since cost ordering determines which sentences are most likely to be optimal for the supplied workload.}

On the depth-$6$ $\sqrt[4]{\text{iSWAP}}$ apex target of Fig.~\ref{fig:depth_convergence_failure}, where naive joint optimization reaches $0/32$, GULPS succeeds in $1.4\,$ms with a length-$6$ sentence at machine-precision fidelity. Across the full apex sweep ($d=2,\dots,6$) it succeeds at every depth with wall-clock between $0.3$ and $3.8\,$ms.

The two terms have different ceilings. The LP term is a soft bottleneck: for batched compilation, the \texttt{monodromy} package precomputes the union of cost-ordered sentence polytopes into a coverage map that is scanned per target, eliminating per-target LP solves and amortizing this cost to near zero on a fixed ISA. The numerical term is the dominant bottleneck. Eliminating the Gauss--Newton/Levenberg--Marquardt stage would require an algebraic construction that produces segment solutions directly from the target invariants. We investigated this possibility and found no construction that is both computationally efficient and stable across the full Weyl chamber. The obstruction arises at degenerate Weyl faces, where the solution set ceases to be locally isolated and no globally consistent section exists. We therefore retain per-segment numerical optimization as the production solution.

The runtime distribution also exhibits a fat tail. Across the iSWAP- and CX-power benchmarks at $N{=}1000$ Haar targets per ISA, $p_{95}$ ranges from $1.2$ to $2.2\,$ms while the maximum observed sits between $2.4$ and $21.7\,$ms; the heaviest tail occurs on the single-gate $\sqrt[4]{\text{iSWAP}}$ ISA. The cause is geometric rather than algorithmic. Tail events correlate with targets near the Weyl-chamber symmetry faces ($c_1 \approx c_2$ or $c_2 \approx c_3$), where the Makhlin Jacobian becomes rank-deficient and the Gauss--Newton stage requires more restarts to converge. At a generic target, the local rotations that solve a segment form a discrete set of isolated points. At a symmetry face the canonical gate is fixed by a continuous one-parameter family of local rotations, so the solution set thickens into a curve. The Makhlin Jacobian drops a rank along that curve, and the Gauss--Newton step has no preferred direction.

\texttt{XXDecomposer} is tailored to the $XX$-family while GULPS handles arbitrary ISAs. The two pipelines have comparable complexity on the discrete problem; the observed speed gap reflects engineering effort that \texttt{XXDecomposer} could close with comparable optimization. \textbf{A completely generic method is already competitive with highly specialized tooling.} In the continuous one-parameter regime ($\theta$-free $R_{XX}$ or $\mathrm{CR}_Z$), Qiskit collapses synthesis to a trivial single-parameter optimization and is much faster. The MILP formulation of Section~\ref{sec:methods} extends GULPS to this regime for arbitrary base gates, and Section~\ref{sec:continuous} uses it to bound the cost achievable by any discrete sampling of $G^k$.

\begin{figure}[t]
    \centering
    \includegraphics[width=\columnwidth]{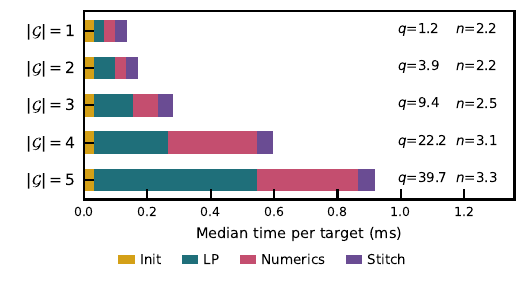}
    \caption{Median per-target time breakdown for the progressive iSWAP-family $\mathcal{G}_k$. Stacked bars decompose per-target wall-clock into initialization, LP enumeration, per-segment numerical synthesis, and stitching. The right-hand column reports the analytically-computed $q$ and $n$. $200$ Haar-random targets per ISA.}
    \label{fig:profiling}
\end{figure}

GULPS assumes block-consolidated two-qubit input, duration-based costs with linear composition, and exact unitary equivalence. The ISA must cover the Weyl chamber; otherwise the LP is infeasible. Noise-aware objectives, cross-talk, and approximate synthesis require different formulations. A complementary multi-qubit MILP approach~\cite{nagarajan2025provably} targets arbitrary qubit counts over fixed discrete gate sets such as Clifford+$T$, and therefore operates in an orthogonal regime. Extending GULPS beyond this setting requires either basis-specific analytical constructions or alternative objective functions; no closed-form construction is known for the generic XX+YY+ZZ CAN decomposition.

\section{Continuous Gate Families}
\label{sec:continuous}

\begin{figure}[t]
    \centering
    \includegraphics[width=\columnwidth]{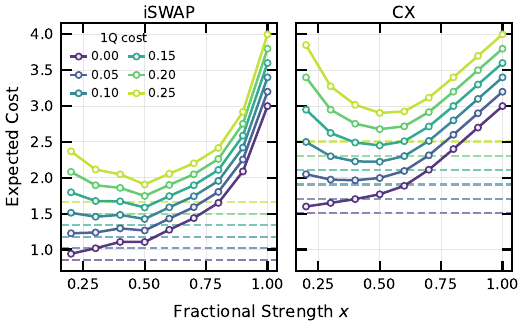}
    \caption{Expected cost vs.\ fractional strength for the iSWAP family (left) and the CX family (right) at several values of $\texttt{single\_qubit\_cost}$. Dashed lines mark the continuous-ISA baseline at the matching overhead. Discrete cost is computed by exact Haar integration over the monodromy coverage polytope; the continuous baseline is a sample estimate at 1000 targets.}
    \label{fig:cost_vs_power}
\end{figure}

The continuous-ISA formulation of Section~\ref{sec:methods} treats the gate strength $k \in [0,1]$ as a free decision variable. For a one-parameter family $G^k$, this gives the expected cost of an unconstrained continuous calibration as a baseline against which discrete samples of strengths can be measured.

The MILP of Section~\ref{sec:methods} is base-gate-agnostic: only the base monodromy $B$ enters as parameter data, while the QLR inequalities, semicontinuous strength variables, and monotonic-ordering constraints are assembled identically across families. The continuous backend is unoptimized and runs roughly $30{\times}$ slower per target than the discrete pipeline; we use it as a reference baseline, not as a production solver.

For each base gate we sweep the fractional strength $x$ in $G^x$ and compute the Haar-weighted expected cost of the resulting one-gate ISA by exact integration over the monodromy coverage polytope (Fig.~\ref{fig:cost_vs_power}). At zero local-rotation cost a smaller $x$ is always cheaper: e.g., the one-gate ISA $\{G^{0.25}\}$ is a strict superset of $\{G^{0.5}\}$, since each weaker gate simulates any stronger one by repetition at the same total 2Q weight. With nonzero overhead, the depth term dominates at small $x$ and the cost curves bend back upward; a minimum appears near $x = 1/2$. At every overhead, iSWAP-family ISAs have lower absolute cost than the corresponding CX-family ISAs, consistent with the known efficiency of iSWAP relative to CX at equivalent normalized duration.

Adding a second calibrated strength reduces the dispersion. We sweep all pairs $(x_1, x_2)$ on a $9{\times}9$ grid (Fig.~\ref{fig:synergy}). Single-gate costs span a $21\%$--$189\%$ gap relative to the continuous baseline as $x$ varies from $0.5$ to $1.0$; pairing each $x$ with its best grid complement narrows the range to $8\%$--$21\%$. The optimal pair on iSWAP is $(0.30,\, 0.50)$, with $\sqrt{\text{iSWAP}}$ as the primary calibration and $x \approx 0.3$ as the complement. All percentages in this paragraph and Table~\ref{tab:continuous_summary} use $\texttt{single\_qubit\_cost}{=}0.1$; Fig.~\ref{fig:cost_vs_power} shows the single-strength dependence on overhead.

\begin{figure}[t]
    \centering
    \includegraphics[width=.95\columnwidth]{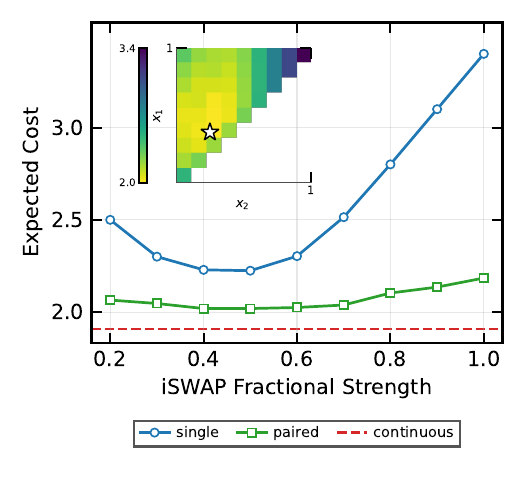}
    \caption{Two-strength synergy on the iSWAP$^x$ family at $\texttt{single\_qubit\_cost}{=}0.1$. Single-gate cost vs.\ $x$ alone (\emph{single}) and when $x$ is joined by its best complement from the $9{\times}9$ pair grid (\emph{paired}); the continuous-ISA baseline is the asymptote. Inset: Haar-weighted expected cost over the pair grid; the star marks the optimal pair $(0.30,\, 0.50)$.}
    \label{fig:synergy}
\end{figure}

The same scan on the CX, iSWAP, and SYC ($\text{fSim}(\pi/2, \pi/6)$) families gives Table~\ref{tab:continuous_summary}. The best single strength leaves a $17\%$--$24\%$ gap to the continuous baseline; a second strength closes that to $6\%$--$9\%$, a ${\sim}2.5{\times}$ reduction in each case, with optimal strengths and absolute costs varying by family.

\begin{table}[t]
\centering
\caption{Best single and best pair on three base-gate families at $\texttt{single\_qubit\_cost}{=}0.1$.}
\label{tab:continuous_summary}
\begin{tabular}{lcccc}
\toprule
Base gate & Best $x^\star$ & Single gap & Best pair $(x_1^\star, x_2^\star)$ & Pair gap \\
\midrule
CX    & $0.50$ & $+16.6\%$ & $(0.40,\, 0.50)$ & $+5.9\%$ \\
iSWAP & $0.50$ & $+21.2\%$ & $(0.30,\, 0.50)$ & $+8.3\%$ \\
SYC   & $0.30$ & $+24.0\%$ & $(0.30,\, 0.50)$ & $+8.8\%$ \\
\bottomrule
\end{tabular}
\end{table}

The continuous-ISA cost is a lower bound on the Haar-averaged expected cost achievable by any discrete sampling of $G^k$ at fixed local-rotation overhead. On Haar-random workloads, two strengths approach this bound to within single-digit percent and a third yields diminishing improvement. The optimal strengths and the absolute gap shift with the local-rotation overhead, since the overhead sets the depth penalty for weaker gates. Whether the second calibration is worth its hardware cost is platform-specific; the scan above runs on any user-supplied base gate through the public package.

\section{Conclusion}
GULPS extracts reachability and intermediate invariants from monodromy geometry, decomposing two-qubit synthesis into segment-wise local-rotation fits gated by a linear program. It compiles against arbitrary heterogeneous ISAs at speeds competitive with closed-form decomposers for individual gate families, recovers the cost-optimal sentence on every Haar-random target tested, and reaches the double-precision unitary-infidelity floor across more than $5000$ decompositions. The continuous-ISA formulation provides a Haar-averaged lower bound on expected cost, which two-strength discrete calibrations approach to within $6\%$--$9\%$ on the base gates we tested.


\section*{Acknowledgments}
We thank Ali Javadi-Abhari and Eric C. Peterson for their discussions and feedback during the development of this work. 

\appendices

\section{Software Availability}
\label{sec:software}

GULPS is distributed on PyPI as \texttt{gulps}, is part of the Qiskit ecosystem, and registers as a Qiskit translation-stage plugin. The continuous-ISA solver of Section~\ref{sec:continuous} requires the optional CPLEX backend, available via \texttt{pip install gulps[cplex]}. Source, tests, and tutorial notebooks are available at \url{https://github.com/evmckinney9/gulps}.

\section{Sentence ordering}
\label{app:sentence_ordering}

\begin{theorem}
\label{th:ordering}
For any circuit sentence \(S\) and for any permutation \(\pi\) of the fixed gates \((G_1,\dots, G_n)\), there exist local unitaries such that the reordered sentence
\[
\tilde{S} : \theta \mapsto \Bigl( \tilde{K}_0(\theta),\, G_{\pi(1)},\, \tilde{K}_1(\theta),\, G_{\pi(2)},\, \dots,\, G_{\pi(n)},\, \tilde{K}_n(\theta) \Bigr)
\]
implements a unitary that is locally equivalent to that produced by \(S\). In other words, the circuit polytope is invariant under any permutation of the fixed gates.
\end{theorem}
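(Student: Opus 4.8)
The plan is to reduce the general permutation case to the case of an adjacent transposition, and then prove that adjacent transpositions of fixed gates preserve the circuit polytope by absorbing a conjugating local unitary into the neighboring local rotations. Since any permutation $\pi$ can be written as a product of adjacent transpositions, and local equivalence is transitive, it suffices to show the claim when $\pi$ swaps $G_j$ and $G_{j+1}$ for a single index $j$ while fixing all other gates.

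For the adjacent-transposition step, I would focus on the local segment $K_{j-1}(\theta)\, G_j\, K_j(\theta)\, G_{j+1}\, K_{j+1}(\theta)$ and show it equals, up to relabeling of local unitaries, a segment of the form $\tilde K_{j-1}\, G_{j+1}\, \tilde K_j\, G_j\, \tilde K_{j+1}$. The key observation is the following ``commutation via locals'' fact: for fixed two-qubit gates $G$ and $H$, there exist local unitaries $L, L'$ such that $G \cdot M \cdot H = H \cdot (\text{local}) \cdot M' \cdot (\text{local}) \cdot G$ for a suitable reparameterization $M'$ of the intervening local unitary $M$ — more precisely, I want to exhibit locals so that the product $G_j\, K_j\, G_{j+1}$ is rewritten as $G_{j+1}\, K_j'\, G_j$ times locals on the outside, where $K_j'$ ranges over exactly the same set of local unitaries as $K_j$ does (namely all of $\mathrm{PU}(2)\times\mathrm{PU}(2)$). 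The cleanest route: take the dagger/transpose-type symmetry of $\mathrm{CAN}$. Write each $G_m = A_m\, \CAN(\vec a^{(m)})\, B_m$ with $A_m, B_m$ local; then note $\CAN(\vec a)$ is symmetric under the SWAP conjugation and under reversal, so $G_j\, K_j\, G_{j+1}$ and $G_{j+1}\, (\text{reflected } K_j)\, G_j$ produce the same nonlocal invariants because the map $K_j \mapsto$ (the reflected/conjugated local unitary) is a bijection of $\mathrm{PU}(2)\times\mathrm{PU}(2)$ onto itself. Concretely I would use that $\CAN$ commutes with $\mathrm{SWAP}$ and satisfies $\CAN(\vec a)^{T} = \CAN(\vec a)$ in the standard basis, so reversing the order of a two-gate word can be undone by transposing and relabeling the sandwiched locals.

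Having established that one adjacent transposition preserves the reachable set of invariants (equivalently the circuit polytope, by the earlier remark that the sentence determines the polytope and local rotations do not affect the span of nonlocal invariants), I would then argue by induction on the number of adjacent transpositions needed to realize $\pi$: each swap yields a new sentence whose circuit polytope equals the previous one, and the composition of the local-unitary bijections at each step is again a bijection of the parameter space, so the final reordered sentence $\tilde S$ has the same circuit polytope as $S$. It remains only to note that ``same circuit polytope'' is exactly the statement that for every $\theta$ there exist local unitaries making $\tilde S$'s output locally equivalent to $S$'s output, which is what the theorem asserts.

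The main obstacle I anticipate is making the ``commutation via locals'' step fully rigorous: one must be careful that the reparameterization $K_j \mapsto K_j'$ is genuinely surjective onto $\mathrm{PU}(2)\times\mathrm{PU}(2)$ (so that no nonlocal invariants are lost when reordering) and that the leftover local factors can be cleanly absorbed into $\tilde K_{j-1}$ and $\tilde K_{j+1}$ without creating obstructions at the boundary of the segment — in particular, when $j=1$ or $j+1=n$ the absorption into $K_0$ or $K_n$ should still go through since those are unconstrained local unitaries. A secondary subtlety is handling the case where $G_j = G_{j+1}$ or where the gates share structure, but this is harmless since the bijection argument does not depend on the gates being distinct. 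I would also double-check that working in $\mathrm{PU}(4)$ rather than $\mathrm{U}(4)$ causes no trouble, since global phases are quotiented out and all the local-unitary manipulations descend to the projective group.
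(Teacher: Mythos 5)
Your proposal is correct and follows essentially the same route as the paper: reduce to adjacent transpositions, undo each one by reversing the two-gate word with an anti-automorphism that preserves the nonlocal class of each fixed gate, relabel the sandwiched local unitaries, and induct. The only difference is cosmetic --- the paper reverses via the adjoint (which flips the sign of the canonical invariants, hence its weaker ``$\sim$'' equivalence), whereas you reverse via the transpose, which fixes $\mathrm{CAN}$ exactly; your added care about the surjectivity of the relabeling map and the absorption of boundary locals is a reasonable tightening of the paper's terse induction step rather than a new idea.
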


\begin{proof}
Any permutation is a product of adjacent swaps, so it suffices to show that exchanging one adjacent pair $(G_i, G_{i+1})$ leaves the polytope unchanged. The general adjacent swap reduces to the depth-2 case: if a two-gate sentence $G_{i+1}\, K_i\, G_i$ is locally equivalent to $G_i\, K'_i\, G_{i+1}$, then by definition there exist locals $L, L'$ with
\[
G_{i+1}\, K_i\, G_i \;=\; L\,\bigl(G_i\, K'_i\, G_{i+1}\bigr)\, L',
\]
and $L, L'$ can be absorbed into the surrounding layers $K_{i+1}, K_{i-1}$ of any longer sentence without disturbing any other fixed gate. So the only thing to prove is the depth-2 identity $\mathrm{Pol}(G_1, G_2) = \mathrm{Pol}(G_2, G_1)$.

This is now visible directly from the polytope inequalities at Eq.~\eqref{eq:qlr_inequalities}. Each inequality is linear in the alcove coordinates of $G_1$, $G_2$, and the composition; the coordinates $\alpha$ of $G_1$ and $\beta$ of $G_2$ enter through identical-shape summands. The list of inequalities is itself symmetric under exchanging the two input slots for every QLR datum $(a,b,c,d)$ that selects a constraint, the swapped datum $(b,a,c,d)$ selects another. Simultaneously relabeling $(\alpha, a) \leftrightarrow (\beta, b)$ therefore leaves the entire system unchanged.
\end{proof}

\bibliographystyle{IEEEtran}
\bibliography{refs, zotero_refs}

@article{fourer1982solving,
  title     = {Solving staircase linear programs by the simplex method, 1: Inversion},
  author    = {Fourer, Robert},
  journal   = {Mathematical Programming},
  volume    = {23},
  number    = {1},
  pages     = {274--313},
  year      = {1982},
  publisher = {Springer}
}

@article{krantz2019quantum,
  title     = {A quantum engineer's guide to superconducting qubits},
  author    = {Krantz, Philip and Kjaergaard, Morten and Yan, Fei and Orlando, Terry P and Gustavsson, Simon and Oliver, William D},
  journal   = {Applied physics reviews},
  volume    = {6},
  number    = {2},
  year      = {2019},
  publisher = {AIP Publishing}
}

@article{tsunoda2023error,
  title     = {Error-detectable bosonic entangling gates with a noisy ancilla},
  author    = {Tsunoda, Takahiro and Teoh, James D and Kalfus, William D and De Graaf, Stijn J and Chapman, Benjamin J and Curtis, Jacob C and Thakur, Neel and Girvin, Steven M and Schoelkopf, Robert J},
  journal   = {PRX Quantum},
  volume    = {4},
  number    = {2},
  pages     = {020354},
  year      = {2023},
  publisher = {APS}
}

@article{chen2025efficient,
  title     = {Efficient implementation of arbitrary two-qubit gates using unified control},
  author    = {Chen, Zhen and Liu, Weiyang and Ma, Yanjun and Sun, Weijie and Wang, Ruixia and Wang, He and Xu, Huikai and Xue, Guangming and Yan, Haisheng and Yang, Zhen and others},
  journal   = {Nature Physics},
  pages     = {1--8},
  year      = {2025},
  publisher = {Nature Publishing Group UK London}
}

@article{wei2024native,
  title     = {Native two-qubit gates in fixed-coupling, fixed-frequency transmons beyond cross-resonance interaction},
  author    = {Wei, Ken Xuan and Lauer, Isaac and Pritchett, Emily and Shanks, William and McKay, David C and Javadi-Abhari, Ali},
  journal   = {PRX Quantum},
  volume    = {5},
  number    = {2},
  pages     = {020338},
  year      = {2024},
  publisher = {APS}
}

@article{sugawara20254,
  title   = {SU (4) gate design via unitary process tomography: its application to cross-resonance based superconducting quantum devices},
  author  = {Sugawara, Michihiko and Satoh, Takahiko},
  journal = {arXiv preprint arXiv:2503.09343},
  year    = {2025}
}

@article{scarato2025realizing,
  title   = {Realizing a Continuous Set of Two-Qubit Gates Parameterized by an Idle Time},
  author  = {Scarato, Colin and Hanke, Kilian and Remm, Ants and Laz{\u{a}}r, Stefania and Lacroix, Nathan and Zanuz, Dante Colao and Flasby, Alexander and Wallraff, Andreas and Hellings, Christoph},
  journal = {arXiv preprint arXiv:2503.11204},
  year    = {2025}
}

@article{foxen2020demonstrating,
  title     = {Demonstrating a continuous set of two-qubit gates for near-term quantum algorithms},
  author    = {Foxen, Brooks and Neill, Charles and Dunsworth, Andrew and Roushan, Pedram and Chiaro, Ben and Megrant, Anthony and Kelly, Julian and Chen, Zijun and Satzinger, Kevin and Barends, Rami and others},
  journal   = {Physical Review Letters},
  volume    = {125},
  number    = {12},
  pages     = {120504},
  year      = {2020},
  publisher = {APS}
}

@inproceedings{kalloor2024quantum,
  title        = {Quantum Hardware Roofline: Evaluating the Impact of Gate Expressivity on Quantum Processor Design},
  author       = {Kalloor, Justin and Weiden, Mathias and Younis, Ed and Kubiatowicz, John and De Jong, Bert and Iancu, Costin},
  booktitle    = {2024 IEEE International Conference on Quantum Computing and Engineering (QCE)},
  volume       = {1},
  pages        = {805--816},
  year         = {2024},
  organization = {IEEE}
}

@article{yale2025realization,
  title   = {Realization and Calibration of Continuously Parameterized Two-Qubit Gates on a Trapped-Ion Quantum Processor},
  author  = {Yale, Christopher G and Burch, Ashlyn D and Chow, Matthew NH and Ruzic, Brandon P and Lobser, Daniel S and McFarland, Brian K and Revelle, Melissa C and Clark, Susan M},
  journal = {arXiv preprint arXiv:2504.06259},
  year    = {2025}
}

@article{yale2024noise,
  title   = {Noise-Aware Circuit Compilations for a Continuously Parameterized Two-Qubit Gateset},
  author  = {Yale, Christopher G and Rines, Rich and Omole, Victory and Thotakura, Bharath and Burch, Ashlyn D and Chow, Matthew NH and Ivory, Megan and Lobser, Daniel and McFarland, Brian K and Revelle, Melissa C and others},
  journal = {arXiv preprint arXiv:2411.01094},
  year    = {2024}
}

@inproceedings{chen2024one,
  title     = {One gate scheme to rule them all: Introducing a complex yet reduced instruction set for quantum computing},
  author    = {Chen, Jianxin and Ding, Dawei and Gong, Weiyuan and Huang, Cupjin and Ye, Qi},
  booktitle = {Proceedings of the 29th ACM International Conference on Architectural Support for Programming Languages and Operating Systems, Volume 2},
  pages     = {779--796},
  year      = {2024}
}

@article{nagarajan2025provably,
  title   = {Provably Optimal Quantum Circuits with Mixed-Integer Programming},
  author  = {Nagarajan, Harsha and Szab{\'o}, Zsolt},
  journal = {arXiv preprint arXiv:2510.00649},
  year    = {2025}
}

@article{kakkar2025no,
  title   = {No need to calibrate: characterization and compilation for high-fidelity circuit execution using imperfect gates},
  author  = {Kakkar, Ashish and Marsh, Samuel and Wang, Yulun and Mundada, Pranav and Coote, Paul and Hartnett, Gavin and Biercuk, Michael J and Baum, Yuval},
  journal = {arXiv preprint arXiv:2511.21831},
  year    = {2025}
}

@article{wierichs2025recursive,
  title   = {Recursive Cartan decompositions for unitary synthesis},
  author  = {Wierichs, David and West, Maxwell and Forestano, Roy T and Cerezo, M and Killoran, Nathan},
  journal = {arXiv preprint arXiv:2503.19014},
  year    = {2025}
}

@inproceedings{shende2005synthesis,
  title     = {Synthesis of quantum logic circuits},
  author    = {Shende, Vivek V and Bullock, Stephen S and Markov, Igor L},
  booktitle = {Proceedings of the 2005 Asia and South Pacific Design Automation Conference},
  pages     = {272--275},
  year      = {2005}
}

@article{saeedi2010block,
  title   = {Block-based quantum-logic synthesis},
  author  = {Saeedi, Mehdi and Arabzadeh, Mona and Zamani, Morteza Saheb and Sedighi, Mehdi},
  journal = {arXiv preprint arXiv:1011.2159},
  year    = {2010}
}

@article{rakyta2022approaching,
  title     = {Approaching the theoretical limit in quantum gate decomposition},
  author    = {Rakyta, P{\'e}ter and Zimbor{\'a}s, Zolt{\'a}n},
  journal   = {Quantum},
  volume    = {6},
  pages     = {710},
  year      = {2022},
  publisher = {Verein zur F{\"o}rderung des Open Access Publizierens in den Quantenwissenschaften}
}

@article{drury2008constructive,
  title     = {Constructive quantum Shannon decomposition from Cartan involutions},
  author    = {Drury, Byron and Love, Peter},
  journal   = {Journal of Physics A: Mathematical and Theoretical},
  volume    = {41},
  number    = {39},
  pages     = {395305},
  year      = {2008},
  publisher = {IOP Publishing}
}

@inproceedings{mckinney2024mirage,
  title        = {Mirage: Quantum circuit decomposition and routing collaborative design using mirror gates},
  author       = {McKinney, Evan and Hatridge, Michael and Jones, Alex K},
  booktitle    = {2024 IEEE International Symposium on High-Performance Computer Architecture (HPCA)},
  pages        = {704--718},
  year         = {2024},
  organization = {IEEE}
}

@article{makhlin2002nonlocal,
  title     = {Nonlocal properties of two-qubit gates and mixed states, and the optimization of quantum computations},
  author    = {Makhlin, Yuriy},
  journal   = {Quantum Information Processing},
  volume    = {1},
  pages     = {243--252},
  year      = {2002},
  publisher = {Springer}
}

@inproceedings{lin2022let,
  title        = {Let each quantum bit choose its basis gates},
  author       = {Lin, Sophia Fuhui and Sussman, Sara and Duckering, Casey and Mundada, Pranav S and Baker, Jonathan M and Kumar, Rohan S and Houck, Andrew A and Chong, Frederic T},
  booktitle    = {2022 55th IEEE/ACM International Symposium on Microarchitecture (MICRO)},
  pages        = {1042--1058},
  year         = {2022},
  organization = {IEEE}
}

@article{watts2013metric,
  title     = {Metric structure of the space of two-qubit gates, perfect entanglers and quantum control},
  author    = {Watts, Paul and O'Connor, Maurice and Vala, Ji{\v{r}}{\'\i}},
  journal   = {Entropy},
  volume    = {15},
  number    = {6},
  pages     = {1963--1984},
  year      = {2013},
  publisher = {MDPI}
}

@article{mele2024introduction,
  title     = {Introduction to Haar measure tools in quantum information: A beginner's tutorial},
  author    = {Mele, Antonio Anna},
  journal   = {Quantum},
  volume    = {8},
  pages     = {1340},
  year      = {2024},
  publisher = {Verein zur F{\"o}rderung des Open Access Publizierens in den Quantenwissenschaften}
}

@article{peterson2020fixed,
  title     = {Fixed-depth two-qubit circuits and the monodromy polytope},
  author    = {Peterson, Eric C and Crooks, Gavin E and Smith, Robert S},
  journal   = {Quantum},
  volume    = {4},
  pages     = {247},
  year      = {2020},
  publisher = {Verein zur F{\"o}rderung des Open Access Publizierens in den Quantenwissenschaften}
}

@article{agnihotri1998eigenvalues,
  title     = {Eigenvalues of products of unitary matrices and quantum Schubert calculus},
  author    = {Agnihotri, Sharad and Woodward, Chris},
  journal   = {Mathematical Research Letters},
  volume    = {5},
  number    = {6},
  pages     = {817--836},
  year      = {1998},
  publisher = {International Press of Boston, Inc. Somerville, MA 02143, USA}
}

@article{belkale2001local,
  title     = {Local systems on 1- S for S a finite set},
  author    = {Belkale, Prakash},
  journal   = {Compositio Mathematica},
  volume    = {129},
  number    = {1},
  pages     = {67--86},
  year      = {2001},
  publisher = {London Mathematical Society}
}

@article{peterson2022optimal,
  title     = {Optimal synthesis into fixed xx interactions},
  author    = {Peterson, Eric C and Bishop, Lev S and Javadi-Abhari, Ali},
  journal   = {Quantum},
  volume    = {6},
  pages     = {696},
  year      = {2022},
  publisher = {Verein zur F{\"o}rderung des Open Access Publizierens in den Quantenwissenschaften}
}

@inproceedings{davis2020towards,
  title        = {Towards optimal topology aware quantum circuit synthesis},
  author       = {Davis, Marc G and Smith, Ethan and Tudor, Ana and Sen, Koushik and Siddiqi, Irfan and Iancu, Costin},
  booktitle    = {2020 IEEE International Conference on Quantum Computing and Engineering (QCE)},
  pages        = {223--234},
  year         = {2020},
  organization = {IEEE}
}

@misc{almeida_fractional_2024,
  title    = {Fractional gates reduce circuit depth at the utility scale {\textbar} {IBM} {Quantum} {Computing} {Blog}},
  url      = {https://www.ibm.com/quantum/blog/fractional-gates},
  abstract = {Fractional gates are a new type of quantum logic gate that can help to increase the efficiency of utility-scale experiments.},
  language = {en},
  urldate  = {2025-03-07},
  author   = {Almeida, Daniella Garcia and Ferris, Kaelyn and Kanzawa, Naoki and Johnson, Blake R and Davis, Robert},
  month    = nov,
  year     = {2024}
}

@inproceedings{javadi_improving_2023,
  title     = {Improving quantum circuits with heterogenous gatesets},
  booktitle = {American {Physical} {Society} ({March} {Meeting})},
  author    = {Javadi, Ali},
  year      = {2023},
  keywords  = {⛔ No DOI found}
}

@article{madden_best_2022,
  title   = {Best approximate quantum compiling problems},
  volume  = {3},
  doi     = {10.1145/3505181},
  number  = {2},
  journal = {ACM Transactions on Quantum Computing},
  author  = {Madden, Liam and Simonetto, Andrea},
  year    = {2022},
  note    = {Publisher: ACM New York, NY},
  pages   = {1--29}
}

@article{zhang_geometric_2003,
  title   = {Geometric theory of nonlocal two-qubit operations},
  volume  = {67},
  doi     = {10.1103/PhysRevA.67.042313},
  number  = {4},
  journal = {Physical Review A},
  author  = {Zhang, Jun and Vala, Jiri and Sastry, Shankar and Whaley, K Birgitta},
  year    = {2003},
  note    = {Publisher: APS},
  pages   = {042313}
}

@inproceedings{lao_designing_2021,
  title     = {Designing calibration and expressivity-efficient instruction sets for quantum computing},
  doi       = {10.1109/ISCA52012.2021.00071},
  booktitle = {2021 {ACM}/{IEEE} 48th {Annual} {International} {Symposium} on {Computer} {Architecture} ({ISCA})},
  publisher = {IEEE},
  author    = {Lao, Lingling and Murali, Prakash and Martonosi, Margaret and Browne, Dan},
  year      = {2021},
  pages     = {846--859}
}

@article{mckay_efficient_2017,
  title   = {Efficient {Z} gates for quantum computing},
  volume  = {96},
  issn    = {2469-9934},
  url     = {http://dx.doi.org/10.1103/PhysRevA.96.022330},
  doi     = {10.1103/physreva.96.022330},
  number  = {2},
  journal = {Physical Review A},
  author  = {McKay, David C. and Wood, Christopher J. and Sheldon, Sarah and Chow, Jerry M. and Gambetta, Jay M.},
  month   = aug,
  year    = {2017},
  note    = {Publisher: American Physical Society (APS)}
}

@misc{younis_qfast_2020,
  title      = {{QFAST}: {Quantum} {Synthesis} {Using} a {Hierarchical} {Continuous} {Circuit} {Space}},
  shorttitle = {{QFAST}},
  url        = {http://arxiv.org/abs/2003.04462},
  abstract   = {We present QFAST, a quantum synthesis tool designed to produce short circuits and to scale well in practice. Our contributions are: 1) a novel representation of circuits able to encode placement and topology; 2) a hierarchical approach with an iterative refinement formulation that combines "coarse-grained" fast optimization during circuit structure search with a good, but slower, optimization stage only in the final circuit instantiation stage. When compared against state-of-the-art techniques, although not optimal, QFAST can generate much shorter circuits for "time dependent evolution" algorithms used by domain scientists. We also show the composability and tunability of our formulation in terms of circuit depth and running time. For example, we show how to generate shorter circuits by plugging in the best available third party synthesis algorithm at a given hierarchy level. Composability enables portability across chip architectures, which is missing from the available approaches.},
  urldate    = {2023-10-20},
  publisher  = {arXiv},
  author     = {Younis, Ed and Sen, Koushik and Yelick, Katherine and Iancu, Costin},
  month      = mar,
  year       = {2020},
  note       = {arXiv:2003.04462 [quant-ph]}
}

@article{smith_leap_2022,
  title      = {{LEAP}: {Scaling} {Numerical} {Optimization} {Based} {Synthesis} {Using} an {Incremental} {Approach}},
  issn       = {2643-6809, 2643-6817},
  shorttitle = {{LEAP}},
  url        = {https://dl.acm.org/doi/10.1145/3548693},
  doi        = {10.1145/3548693},
  abstract   = {We present techniques for incremental synthesis of quantum circuits using numerical optimization and search. Due to its ability to produce short depth circuits, synthesis provides a very valuable tool for quantum circuit optimization, especially for Noisy Intermediate Scale Quantum devices with short coherence time and noisy gates.},
  language   = {en},
  urldate    = {2022-08-23},
  journal    = {ACM Transactions on Quantum Computing},
  author     = {Smith, Ethan and Davis, Marc G. and Larson, Jeffrey M. and Younis, Ed and Bassman, Lindsay and Lavrijsen, Wim and Iancu, Costin},
  month      = aug,
  year       = {2022},
  pages      = {3548693}
}

@misc{younis_berkeley_2021,
  title  = {Berkeley {Quantum} {Synthesis} {Toolkit} ({BQSKit}) v1},
  url    = {https://www.osti.gov//servlets/purl/1785933},
  author = {Younis, Ed and Iancu, Costin C and Lavrijsen, Wim and Davis, Marc and Smith, Ethan and {USDOE}},
  month  = apr,
  year   = {2021},
  doi    = {10.11578/dc.20210603.2}
}

@misc{tucci_introduction_2005,
  title     = {An {Introduction} to {Cartan}'s {KAK} {Decomposition} for {QC} {Programmers}},
  url       = {http://arxiv.org/abs/quant-ph/0507171},
  abstract  = {This paper presents no new results; its goals are purely pedagogical. A special case of the Cartan Decomposition has found much utility in the field of quantum computing, especially in its sub-field of quantum compiling. This special case allows one to factor a general 2-qubit operation (i.e., an element of U(4)) into local operations applied before and after a three parameter, non-local operation. In this paper, we give a complete and rigorous proof of this special case of Cartan's Decomposition. From the point of view of QC programmers who might not be familiar with the subtleties of Lie Group Theory, the proof given here has the virtues, that it is constructive in nature, and that it uses only Linear Algebra. The constructive proof presented in this paper is implemented in some Octave/Matlab m-files that are included with the paper. Thus, this paper serves as documentation for the attached m-files.},
  urldate   = {2022-11-21},
  publisher = {arXiv},
  author    = {Tucci, Robert R.},
  month     = jul,
  year      = {2005},
  note      = {arXiv:quant-ph/0507171}
}

@article{nemkov_efficient_2023,
  title    = {Efficient variational synthesis of quantum circuits with coherent multi-start optimization},
  volume   = {7},
  issn     = {2521-327X},
  url      = {http://arxiv.org/abs/2205.01121},
  doi      = {10.22331/q-2023-05-04-993},
  abstract = {We consider the problem of the variational quantum circuit synthesis into a gate set consisting of the CNOT gate and arbitrary single-qubit (1q) gates with the primary target being the minimization of the CNOT count. First we note that along with the discrete architecture search suffering from the combinatorial explosion of complexity, optimization over 1q gates can also be a crucial roadblock due to the omnipresence of local minimums (well known in the context of variational quantum algorithms but apparently underappreciated in the context of the variational compiling). Taking the issue seriously, we make an extensive search over the initial conditions an essential part of our approach. Another key idea we propose is to use parametrized two-qubit (2q) controlled phase gates, which can interpolate between the identity gate and the CNOT gate, and allow a continuous relaxation of the discrete architecture search, which can be executed jointly with the optimization over 1q gates. This coherent optimization of the architecture together with 1q gates appears to work surprisingly well in practice, sometimes even outperforming optimization over 1q gates alone (for fixed optimal architectures). As illustrative examples and applications we derive 8 CNOT and T depth 3 decomposition of the 3q Toffoli gate on the nearest-neighbor topology, rediscover known best decompositions of the 4q Toffoli gate on all 4q topologies including a 1 CNOT gate improvement on the star-shaped topology, and propose decomposition of the 5q Toffoli gate on the nearest-neighbor topology with 48 CNOT gates. We also benchmark the performance of our approach on a number of 5q quantum circuits from the ibm\_qx\_mapping database showing that it is highly competitive with the existing software. The algorithm developed in this work is available as a Python package CPFlow.},
  urldate  = {2023-10-20},
  journal  = {Quantum},
  author   = {Nemkov, Nikita A. and Kiktenko, Evgeniy O. and Luchnikov, Ilia A. and Fedorov, Aleksey K.},
  month    = may,
  year     = {2023},
  note     = {arXiv:2205.01121 [quant-ph]},
  pages    = {993}
}

@article{vatan_optimal_2004,
  title    = {Optimal {Quantum} {Circuits} for {General} {Two}-{Qubit} {Gates}},
  volume   = {69},
  issn     = {1050-2947, 1094-1622},
  url      = {http://arxiv.org/abs/quant-ph/0308006},
  doi      = {10.1103/PhysRevA.69.032315},
  abstract = {In order to demonstrate non-trivial quantum computations experimentally, such as the synthesis of arbitrary entangled states, it will be useful to understand how to decompose a desired quantum computation into the shortest possible sequence of one-qubit and two-qubit gates. We contribute to this effort by providing a method to construct an optimal quantum circuit for a general two-qubit gate that requires at most 3 CNOT gates and 15 elementary one-qubit gates. Moreover, if the desired two-qubit gate corresponds to a purely real unitary transformation, we provide a construction that requires at most 2 CNOTs and 12 one-qubit gates. We then prove that these constructions are optimal with respect to the family of CNOT, y-rotation, z-rotation, and phase gates.},
  number   = {3},
  urldate  = {2022-09-12},
  journal  = {Physical Review A},
  author   = {Vatan, Farrokh and Williams, Colin},
  month    = mar,
  year     = {2004},
  note     = {arXiv:quant-ph/0308006},
  pages    = {032315}
}

@article{perez_error-divisible_2023,
  title    = {Error-{Divisible} {Two}-{Qubit} {Gates}},
  volume   = {19},
  url      = {https://link.aps.org/doi/10.1103/PhysRevApplied.19.024043},
  doi      = {10.1103/PhysRevApplied.19.024043},
  abstract = {We introduce a simple widely applicable formalism for designing “error-divisible” two-qubit gates: a quantum gate set where fractional rotations have proportionally reduced error compared to the full entangling gate. In current noisy intermediate-scale quantum (NISQ) algorithms, performance is largely constrained by error proliferation at high circuit depths, of which two-qubit gate error is generally the dominant contribution. Further, in many hardware implementations, arbitrary two-qubit rotations must be composed from multiple two-qubit stock gates, further increasing error. This work introduces a set of criteria, and example wave forms and protocols to satisfy them, using superconducting qubits with tunable couplers for constructing continuous gate sets with significantly reduced leakage and dynamic ZZ errors for small-angle rotations. If implemented at scale, NISQ-algorithm performance could be significantly improved by our error-divisible gate protocols.},
  number   = {2},
  urldate  = {2023-09-21},
  journal  = {Physical Review Applied},
  author   = {Pérez, David Rodríguez and Varosy, Paul and Li, Ziqian and Roy, Tanay and Kapit, Eliot and Schuster, David},
  month    = feb,
  year     = {2023},
  note     = {Publisher: American Physical Society},
  pages    = {024043}
}

\end{document}